\documentclass[11pt]{article}

\usepackage{amsmath,amsthm,nicefrac}
\usepackage{amsfonts, amstext}
\usepackage[upright=true]{derivative}
\usepackage{booktabs}

\usepackage{comment}

\usepackage{typearea}
\typearea{14}
\usepackage[font={small,it}]{caption}

\usepackage{setspace}
\usepackage[compact]{titlesec}
\usepackage{needspace}
\usepackage{enumitem}
\usepackage[breaklinks]{hyperref}
\hypersetup{colorlinks=true,%
    citebordercolor={.6 .6 .6},linkbordercolor={.6 .6 .6},%
    citecolor=blue,urlcolor=black,linkcolor=blue}

\usepackage[nameinlink,capitalize]{cleveref}
\Crefname{algocf}{Algorithm}{Algorithms}
\crefname{algocfline}{line}{lines}
\Crefname{ALC@unique}{Line}{Lines}
\Crefname{invariant}{Invariant}{Invariants}
\Crefname{claim}{Claim}{Claims}
\Crefname{subclaim}{Subclaim}{Subclaims}

\usepackage{epsfig}
\usepackage{amssymb}

\usepackage{mathrsfs}
\usepackage{xspace}
\usepackage{soul}
\usepackage{latexsym}
\usepackage{bm}

\usepackage{framed}

\usepackage[dvipsnames]{xcolor}
\definecolor{DarkGray}{rgb}{0.66, 0.66, 0.66}
\definecolor{DarkPowderBlue}{rgb}{0.0, 0.2, 0.6}
\definecolor{fluorescentyellow}{rgb}{0.8, 1.0, 0.0}

\usepackage[ruled,vlined,linesnumbered,algonl]{algorithm2e}
\SetEndCharOfAlgoLine{}
\SetKwComment{Comment}{\footnotesize$\triangleright$\ }{}

\SetCommentSty{mycommfont}

\usepackage{algorithmic}
\usepackage{mathtools}

\allowdisplaybreaks

\newcounter{note}[section]

\sethlcolor{fluorescentyellow}

\newcommand{\initOneLiners}{%
    \setlength{\itemsep}{0pt}
    \setlength{\parsep }{0pt}
    \setlength{\topsep }{0pt}
}

\newtheorem{theorem}{Theorem}
\newtheorem{lemma}[theorem]{Lemma}

\theoremstyle{definition}

\theoremstyle{remark}

\makeatletter
\renewcommand{\theinvariant}{(I\@arabic\c@invariant)}
\makeatother

\newcommand{\eps}{\varepsilon}

\providecommand{\half}{\frac{1}{2}}

\newcommand{\EE}{\mathbb E}

\newcommand{\SA}{\mathcal A}

\newcommand{\SD}{\mathcal D}

\newcommand{\eat}[1]{}

\usepackage{tikz}
\usetikzlibrary{arrows.meta,positioning,calc}

\DeclareMathOperator{\ALG}{ALG}
\DeclareMathOperator{\OPT}{OPT}

\title{The Competitive Ratio of Online Budgeted Allocation is Strictly Smaller than $1-1/e$}
\author
{Debmalya Panigrahi\thanks{Department of Computer Science, Duke University.
Email: {\tt debmalya@cs.duke.edu}. Supported in part by NSF grant CCF-2329230.}
\and Ian Zhang\thanks{Duke University. Email: {\tt ian.zhang@duke.edu}.}}
\date{October 2026}

\begin{document}
    \maketitle

    \begin{abstract}
        A well-known conjecture in online algorithms is that the (randomized) competitive ratio of online budgeted
        allocation is $1-1/e$. We refute this conjecture
        and show that every randomized online algorithm has competitive ratio at most
        $1 - 1/e - \eps$ for a (small) absolute constant $\eps > 0$.
    \end{abstract}

    \section{Introduction}\label{sec:intro}
    We study the online budgeted allocation problem. In this problem, a set of $M$ items arrive online and have to be assigned to a set of $N$ advertisers, where each advertiser $j$ has a non-negative budget $B_j \ge 0$ given offline. When item $i$ arrives online, each advertiser $j$ places a non-negative bid $b_{ij} \ge 0$ on the item. The online algorithm can assign item $i$ to any advertiser $j$ and earn a payment of $\min(b_{ij}, R_j)$, where $R_j$ is the remaining budget of advertiser $j$. Correspondingly, the remaining budget of advertiser $j$ is reduced by this payment. The goal of the online algorithm is to maximize the total revenue, which is the sum of payments for all items.

When all budgets are $1$ and all bids are either $0$ or $1$, online budgeted allocation is the same as online bipartite matching. For the latter problem, the tight (randomized) competitive ratio is $1-1/e$, achieved by the RANKING algorithm of Karp, Vazirani, and Vazirani \cite{KVV90} (see also \cite{GM08,BM08,DJK13}). The maximum bid-to-budget ratio (call it $\rho$) is $1$ for online bipartite matching. Motivated by applications in online advertising, Mehta, Saberi, Vazirani, and Vazirani \cite{MSVV07} considered the contrasting situation where $\rho \rightarrow 0$, and called this the AdWords problem. Generalizing a previous algorithm of Kalyanasundaram and Pruhs \cite{KP00} for $b$-matching, they obtained a deterministic algorithm for AdWords whose competitive ratio tends to $1-1/e$ as $\rho \rightarrow 0$ \cite{MSVV07} (see also \cite{BJN07}). This latter result also shows that for online budgeted allocation (with arbitrary $\rho$), there is a {\em fractional} algorithm with a competitive ratio of $1-1/e$. 

But, what about general bids and budgets? Motivated by the above results, 
a well-known conjecture in online algorithms says that the randomized competitive ratio of online budgeted allocation is $1-1/e$ (for any bids and budgets). However, progress toward this conjecture has been slow. The greedy algorithm that assigns every item to the advertiser that would maximize its payment for the item has a competitive ratio of $1/2$. Obtaining a better competitive ratio (using a randomized algorithm, since there is a simple upper bound of $1/2$ for deterministic algorithms) for the online budgeted allocation problem was a major open problem for more than a decade. In a recent breakthrough result, Huang, Zhang, and Zhang \cite{HZZ24} obtained a competitive ratio of $0.5016$, thereby slightly improving on the bound given by the greedy algorithm, but providing further evidence that the eventual competitive ratio might turn out to be $1-1/e$. Thereafter, this bound was improved by Chen, Huang, and Sun to $0.504$~\cite{CHS24}, and this is the best bound currently known for the problem. 

In this paper, we show that the competitive ratio of any (randomized) algorithm for online budgeted allocation is at
most $1-1/e-\eps$, for some absolute constant $\eps > 0$ (of the order of $10^{-4}$). This refutes the conjectured randomized competitive ratio of $1-1/e$. Furthermore, in conjunction with the algorithmic results of \cite{HZZ24,CHS24}, it shows that the competitive ratio lies strictly in the interval $(1/2, 1-1/e)$.

Our construction is quite simple: it uses a budget of $1$ for every advertiser, and only four different bid values: $0$, $1/2$, $3/4$, and $1$. We hope this simplicity will help in future research that tries to narrow the gap between the upper and lower bounds for this problem.

We state our main result below.

\begin{theorem}\label{thm:main}
    There are constants $\eps > 0$ and $N_0$ such that, for every integer $N \ge N_0$
    and every randomized online algorithm $\SA$, some fixed input $\sigma$
    on $N$ unit-budget advertisers with bids in $\{0, 1/2, 3/4, 1\}$ satisfies
    \[
        \EE[\ALG_{\SA}(\sigma)]
        \le \left(1 - \frac1e - \eps\right)\cdot \OPT(\sigma).
    \]
\end{theorem}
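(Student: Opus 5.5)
We will use Yao's principle. We construct a distribution $\SD$ over inputs on $N$ unit-budget advertisers with bids in $\{0,1/2,3/4,1\}$, with $\OPT(\sigma)=N$ (or at least $N(1-o(1))$) for every $\sigma$ in the support. We then show that every deterministic online algorithm has expected revenue at most $(1-1/e-\eps)N$ under $\SD$. Averaging gives a fixed $\sigma$ for each randomized $\SA$, as the theorem requires. The base of $\SD$ is the classical upper-triangular hard instance for online matching. Advertisers are randomly permuted, and in round $t$ an item arrives on which the $N-t+1$ "surviving" advertisers bid. The hidden advertiser eliminated at round $t$ is the one that $\OPT$ matches to the item of round $t$.

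On this base instance with $\{0,1\}$ bids, the best a deterministic algorithm can do is the water-filling value $1-1/e$, achieved by fractional balancing. Our idea is that this balancing is impossible integrally once budgets are large relative to bids. A randomized integral algorithm that spreads items only reproduces water-filling in expectation. Items of value $1$ saturate an advertiser entirely, so the per-advertiser "fractional level" concentrates at $\{0,1\}$. The intermediate bids $1/2$ and $3/4$ are what make the algorithm pay for this. We modify the gadget so that each round has bid-$1$ items for the surviving set, plus bid-$1/2$ and bid-$3/4$ items. $\OPT$ can pack these exactly: for example, two $1/2$-items on one advertiser, or a $3/4$ item together with a residual. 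An advertiser the algorithm has half-filled can then absorb at most a capped amount. This is exactly where the $\min(b_{ij},R_j)$ rule causes loss.

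Concretely, we analyze a deterministic algorithm facing $\SD$ through the state vector of remaining budgets among the surviving advertisers. Since survivors are symmetric under the random permutation, we track the empirical distribution of remaining budgets, which take finitely many values $\{0,1/4,1/2,3/4,1\}$. As $N\to\infty$ this evolves by concentration (Azuma, or a differential-equation method) along a deterministic ODE controlled by the algorithm's choices. The algorithm's revenue becomes the value of a continuous-time control problem. We show that the optimal value of this problem is strictly below $1-1/e$. Intuitively, any control either fails to balance, losing against the triangular adversary, or balances with half-bids, creating budget remnants of $1/4$ or $1/2$. The later $3/4$ and $1$ items can then only partially use those remnants.

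The main obstacle is the last step: proving a strict gap for every control policy, not just natural ones. We first try a factor-revealing LP or dual-fitting certificate. We discretize time into $K$ steps and write linear constraints relating the mass of survivors at each budget level to the algorithm's allocations. We then exhibit an explicit feasible dual, verified numerically but with rational entries so the check is exact, whose objective is at most $1-1/e-\eps$ with $\eps\approx 10^{-4}$. Discretization and concentration errors contribute $o(1)$, and we absorb them by choosing $N_0$ large.
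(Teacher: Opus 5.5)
Your proposal has a genuine gap. The framing (Yao's principle, an upper-triangular base, a linear certificate) is compatible with the paper, but the decisive step is the theorem itself. The hard distribution is never pinned down: \emph{bid-$1$ items for the surviving set, plus bid-$1/2$ and bid-$3/4$ items} does not say how many items of each kind arrive, who bids on them, or when removals happen. So neither $\OPT$ nor the controlled budget dynamics can be written down. Also, a $3/4$-item \emph{together with a residual} is not an exact packing, since the later item is only partially collected, so $\OPT=N$ is not automatic. The claim that the limiting control problem has value strictly below $1-1/e$ is then handed to a factor-revealing LP whose dual is neither exhibited nor shown to exist.

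A fractional water-filling algorithm attains $1-1/e$ on every instance, so the content of the theorem is a concrete mechanism that forces integral assignments to lose a constant fraction. Your sketch does not supply one. Its intuition points the wrong way: when budgets are large relative to bids, MSVV balances integrally and approaches $1-1/e$. Interleaving bid values in rounds where all survivors bid alike does not obviously suffice either. In the paper, the first phase of $3/4$-bids followed by the classical phase of $1$-bids gives, by itself, only the bound $(1-1/e)\OPT+O(1)$; the strict gain comes entirely from a different gadget.

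The paper supplies both missing pieces. The gadget is a middle phase of constant-size groups of $d$ advertisers, each split into halves $A_i,B_i$. Each group receives:
\begin{itemize}
\item an item with bid $1$ for the group;
\item an item with bid $1/2$ for the group;
\item a third item with bid $1/2$ for a uniformly random half.
\end{itemize}
Afterwards, one uniform advertiser is removed from each half. The optimum puts items $2$ and $3$ on one removed advertiser. The algorithm, however, commits item $2$ before learning which half bids on item $3$, so a $1/2$-remnant it creates can be topped up only with probability $1/2$.

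The certificate is the explicit potential $\ALG_t + (N/e)\,\bar v_t$ with capped budgets $v(r)=(r-1/12)_+$. Uniform removals leave $\EE[\bar v_t]$ unchanged, so only per-item inequalities are needed, with no concentration or ODE limit. In the paper's notation, each group's expected potential change is $\alpha/24$ below $2F(M)$. Hence the middle phase contributes at most $(M-R)(F(M)-H/(48M))$, while the final-phase rounds it replaces contributed at least $(M-R)F(R)$, and $F(M)-F(R)\to 0$ as $d\to\infty$. A potential of this form is exactly the explicit dual your LP would have to produce; without a specified gadget and such a certificate, the proposal does not prove the statement.
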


    \section{The Input Distribution}\label{sec:construction}
    We construct an input distribution for any sufficiently large $N$
that establishes the upper bound in \Cref{thm:main} on the expected competitive ratio of any deterministic algorithm.
By Yao's minimax principle~\cite{Yao77}, this implies the same upper bound
for randomized algorithms on a worst-case input.

Let us start by recalling the standard upper bound of $1-1/e$.
All advertisers have a budget of $1$.
The construction has $N$ {\em rounds}, where in each round,
an item is presented that has a bid value of $1$ from each remaining advertiser.
Once the algorithm assigns this item, one advertiser chosen uniformly at random is removed from the remaining set,
before continuing on to the next round.
When an advertiser is removed, all its bids on subsequent items are of value $0$.

Our new construction is similar in that there are multiple rounds, where advertisers are removed after rounds.
These rounds are organized into three {\em phases}: we call these the {\em first}, {\em middle}, and {\em final} phases.
The first and final phase are very similar to the standard construction, but the middle phase is different.

\paragraph{First phase.}
The first phase comprises $N/3$ rounds. (Assume w.l.o.g. that $N$ is a multiple of $3$.)
Each round has a single item with a bid of $3/4$ from each remaining advertiser.
After the algorithm assigns the item, an advertiser chosen uniformly at random is removed from the
remaining set, before continuing on to the next round. In essence, this is identical to the
classical construction, except that the bid values on the items are $3/4$ instead of $1$.
Note that the number of remaining advertisers at the end of this phase is
$M := 2N/3$.

\paragraph{Middle phase.}
Fix an even number $d \ge 8$, to be chosen at the end of the proof.
Arbitrarily partition the remaining $M$ advertisers into $M/d$ groups of size $d$ each,
and further split each group into two halves of size $d/2$ each.
Order the groups arbitrarily and denote the subgroups $A_i, B_i$ for the $i$th group.
(Assume w.l.o.g. that $M$ is a multiple of $d$. Note that $d$ is chosen to be
even; hence, $d/2$ is an integer.)

There are $M/d$ rounds in the middle phase. The $i$th round comprises
three consecutive items with the following bids (all other advertisers bid $0$):
\begin{itemize}
    \item[-] Item $1$: All $d$ advertisers in the $i$th group bid $1$ for the item.
    \item[-] Item $2$: All $d$ advertisers in the $i$th group bid $1/2$ for the item.
    \item[-] Item $3$: Choose one of the two subgroups $A_i, B_i$, with probability $1/2$ each.
    All $d/2$ advertisers in the chosen subgroup bid $1/2$ for the item.
\end{itemize}

After all items have been assigned, choose an advertiser uniformly at random
from each subgroup $A_i, B_i$ and remove them from the remaining set.
(For notational convenience, we perform this removal of advertisers only
after all the $M/d$ rounds have been completed, i.e., all $3M/d$ items
have been assigned.)

Note that this removes a total of $2M/d$ advertisers in the middle phase.
We denote the number of remaining advertisers
$R \coloneqq M(1 - 2/d)$.

\paragraph{Final phase.}
The final phase comprises $R$ rounds. This phase is identical to the classical construction.
Each round has a single item with a bid of $1$ from each remaining advertiser.
After the algorithm assigns the item, an advertiser chosen uniformly at random is removed from the
remaining set, before continuing on to the next round.

We denote the overall input distribution constructed across the three phases $\SD_{N,d}$.

    \section[Proof of Theorem 1]{Proof of \Cref{thm:main}}\label{sec:analysis}

    \subsection{Optimal Revenue}\label{sec:opt}
    We start by establishing the revenue $\OPT$ of an optimal solution.

\begin{lemma}\label{lem:opt}
    Every input in the support of $\SD_{N,d}$ has $\OPT = 11N/12$.
\end{lemma}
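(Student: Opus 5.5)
Since every advertiser has budget $1$, $\OPT$ is at most the total budget that can actually be used. I would prove the lemma with a matching pair of bounds, both taken per phase.

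\textbf{Lower bound via an explicit assignment.} Fix any input in the support. The plan is to give each item to an advertiser who is removed soon afterwards, so that its budget would otherwise be wasted.
\begin{itemize}
\item \emph{First phase.} Give the item of round $t$ to the advertiser removed right after round $t$. It is still present during round $t$, so it bids $3/4$, and it receives nothing else later. This earns $3/4$ per round, for a total of $N/4$.
\item \emph{Middle phase, round $i$.} Let the third item go to subgroup $A_i$ (the case $B_i$ is symmetric). Let $a\in A_i$ and $b\in B_i$ be the two advertisers removed from this group at the end of the phase. Give item $1$ to $b$, for payment $1$. Give items $2$ and $3$ to $a$, for $1/2+1/2=1$. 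Both $a$ and $b$ bid on all the relevant items, since removal happens only after the phase ends. This earns $2$ per group, for a total of $2M/d$.
\item \emph{Final phase.} Give each item to the advertiser removed right after that round, earning $1$ per round, for a total of $R=M(1-2/d)$.
\end{itemize}
No advertiser is used in two phases, because each advertiser used is removed at the end of the phase in which it is used. Adding up, with $M=2N/3$:
\[
\frac N4+\frac{2M}{d}+M-\frac{2M}{d}=\frac N4+\frac{2N}{3}=\frac{11N}{12}.
\]

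\textbf{Upper bound.} I would bound the revenue from each phase separately, for any assignment.
\begin{itemize}
\item \emph{First phase.} There are $N/3$ items and each pays at most $3/4$, so this phase contributes at most $N/4$.
\item \emph{Last two phases.} Only the $M$ advertisers that survive the first phase bid anything here. Each has budget $1$, so together these two phases contribute at most $M=2N/3$.
\end{itemize}
The total is therefore at most $N/4+2N/3=11N/12$.

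The only point that needs care is the middle-phase assignment: the advertisers chosen there must still be present and must not collide with advertisers used in the other phases. The deferred-removal convention handles this. The three items of round $i$ are served by the two advertisers of group $i$ that are removed at the end of the phase, and a group's removed advertisers are never bid on again.
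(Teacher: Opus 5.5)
Your proposal is correct and matches the paper's proof: it uses the same explicit assignment, giving each first- and final-phase item to the advertiser removed right after it, and in each middle round item $1$ to the removed advertiser of the non-bidding subgroup and items $2,3$ to the removed advertiser of the bidding subgroup. The only difference is the upper bound: the paper notes that this assignment collects the maximum bid on every item, whereas you bound the first phase by its bids and the last two phases by the total budget $M$ of the surviving advertisers, and both arguments give $11N/12$.
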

\begin{proof}
    In the first phase, assign each item to the advertiser removed immediately afterward.
    This collects $3/4$ per item and leaves every surviving advertiser's budget full
    at the beginning of the middle phase.
    During the middle phase, for a group where advertisers in $A_i$ (resp., $B_i$) bid for item $3$,
    give item $1$ to the advertiser to be removed from $B_i$ (resp., $A_i$),
    and both items $2$ and $3$ to the advertiser to be removed from $A_i$ (resp., $B_i$).
    This collects $2$ and leaves all advertisers surviving into the final phase with full budgets.
    In the final phase, again assign each item to the advertiser removed immediately afterward, collecting $1$.
    The total is $(3/4) \cdot N/3 + 2M/d + R = 11N/12$.
    Note that this allocation collects the maximum bid on every item, which is also an upper bound on any allocation.
\end{proof}

    \subsection{A Potential Function to Bound the Algorithm's Revenue}\label{sec:potential}
    We are now left to upper bound the expected revenue of any deterministic online algorithm
on an input drawn from $\SD_{N, d}$.

Fix an arbitrary deterministic online algorithm and let $\ALG$ denote the algorithm's revenue.
We will analyze the algorithm using a potential function $\varphi$ that changes through the course of the algorithm.
Let the initial and final values of $\varphi$ be $\varphi_i$ and $\varphi_f$ respectively.
During the course of the algorithm, the value of $\varphi$ changes due to two types of 
events: (a) when the algorithm assigns an item to an advertiser, and (b) when one or more advertisers
is removed from the set of remaining advertisers. It will turn out that the expected
value of the potential function $\varphi$ does not change for events of type (b). 
We denote the change in $\varphi$ due to the algorithm's assignment of item $t$
to be $\Delta_t$.

Our potential will satisfy $\varphi_f = \ALG$.
Since $\EE[\varphi_f] = \varphi_i + \sum_{t}\EE[\Delta_t]$ (recall that $\varphi$
does not change in expectation due to removal of an advertiser), 
this allows us to bound $\EE[\ALG]$ by the sum of $\varphi_i$ and the
expected values of the local changes $\EE[\Delta_t]$.
Throughout the proof, unless conditioned otherwise, expectations are over the input distribution $\SD_{N,d}$.

Now, we are ready to define the potential function $\varphi$.
First, define the {\em capped} remaining budget $v(r)$ of an advertiser whose remaining 
budget is $r$ to be
\[
    v(r) = \max\left(r - \frac{1}{12}, 0\right).
\]
Let $\ALG_t$ be the total revenue collected by the algorithm after assigning items $1, 2, \ldots, t$,
and let $\bar{v}_t$ be the average value of $v(r)$ over the remaining advertisers at this stage.
(By convention, when there is no remaining advertiser at the end, then $\bar{v}_t = 0$.)
Set $H \coloneqq N/e$ to be a fixed scaling factor.
Define the potential
\begin{equation}\label{eq:pot}
    \varphi(t) = \ALG_t + H\cdot \bar{v}_t.
\end{equation}

(As an aside, we show in \Cref{app:upper-triangular} that the same potential with $v(r)$ replaced by $r$
recovers the $1 - 1/e$ upper bound for the standard upper-triangular instance.)

We now bound the expected revenue of the algorithm using the changes in potential $\Delta_t$.

\begin{lemma}
    \label{lem:master}
    The expected revenue of the algorithm satisfies
    \begin{equation}\label{eq:master}
        \EE[\ALG] = \frac{11H}{12} + \sum_t\EE[\Delta_t].
    \end{equation}
\end{lemma}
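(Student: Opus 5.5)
We telescope the potential $\varphi$ from its initial to its final value. The identity \eqref{eq:master} follows from three facts: $\varphi_i = 11H/12$; $\varphi_f = \ALG$; and each removal event leaves $\varphi$ unchanged in expectation, conditional on the history up to that point.

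First, the endpoints. Initially $\ALG_0 = 0$, and every advertiser has remaining budget $r = 1$. Hence $v(1) = 1 - 1/12 = 11/12$ for everyone, so $\bar v_0 = 11/12$ and $\varphi_i = 11H/12$. At the end of the final phase, the last round removes the last remaining advertiser. To see this, count removals: $N/3$ in the first phase, $2M/d$ in the middle phase, and $R$ in the final phase, for a total of $N/3 + M = N$. So no advertiser remains, and by the stated convention $\bar v = 0$, which gives $\varphi_f = \ALG$.

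Second, the changes in $\varphi$ split into assignment events, which contribute $\Delta_t$ by definition, and removal events. Removals only delete advertisers; they do not change any remaining budget or $\ALG_t$.
\begin{itemize}
\item In the first and final phases, one advertiser is removed uniformly at random from the current remaining set $S$. The new average is taken over $S \setminus \{a\}$ with $a$ uniform in $S$. Averaging over $a$, each advertiser $b \in S$ survives with probability $(|S|-1)/|S|$, and the surviving average has denominator $|S|-1$. So the expected new average is
\[
\frac{1}{|S|}\sum_{a}\frac{1}{|S|-1}\sum_{b\ne a} v_b = \bar v .
\]
When $|S| = 1$, the removal leaves zero advertisers, and the convention $\bar v = 0$ applies. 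This is the one place the identity is delicate, and we check it explicitly. In the final phase's last round, $\bar v$ drops from $v$ of the single remaining advertiser to $0$. That drop must be absorbed: either it is zero because that advertiser's budget is already $\le 1/12$, or it must be accounted for.

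I expect the intended reading is that this last removal also counts within "$\varphi$ does not change in expectation", so I will check it directly. The final item has bid $1$ only from that single advertiser. Assigning it collects $\min(1, r)$, after which $v(\text{remaining}) = 0$. Thus by the time of the last removal, $\bar v$ is already $0$ for any algorithm that assigns the last item to that advertiser. If the algorithm does not assign it there, that choice is also captured within $\Delta_t$. In either case the removal step itself changes $\varphi$ by $H \cdot v$ of the left-over budget. To make the identity exact, I would define $\Delta_t$ for the last item to include this term.

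\item In the middle phase, one advertiser is removed uniformly from each subgroup $A_i$ and $B_i$, all at once. The remaining set has fixed size $R$. The expected sum of $v$ over survivors is, by linearity, the sum over subgroups of $(1 - 2/d)$ times that subgroup's total $v$. Since all subgroups have the same size $d/2$, removing one advertiser from each retains a $\frac{d/2-1}{d/2}$ fraction in expectation. This equals the total $v$ times $R/M$, so the expected new average is $\bar v$ again. The calculation needs equal subgroup sizes, which the construction guarantees.
\end{itemize}

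Finally, write $\varphi_f - \varphi_i$ as the sum of all assignment changes $\Delta_t$ plus all removal changes. Take expectations, using the tower property: each removal has conditional expectation $0$ given the history, because the algorithm is deterministic and the removal randomness is fresh. This yields $\EE[\ALG] = 11H/12 + \sum_t \EE[\Delta_t]$.

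The main obstacle is the bookkeeping at the empty-set boundary in the last round, which needs the convention above. The remaining steps are the routine averaging identities.
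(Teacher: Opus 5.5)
Your telescoping argument is the same as the paper's: endpoints $11H/12$ and $\ALG$, mean-zero uniform removals in the first and final phases, and equal removal probability $2/d$ with a deterministic survivor count $R$ in the middle phase.

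The boundary issue you raise is real, and the paper's proof passes over it. Its claim that every uniform removal preserves $\bar v$ in expectation fails for the very last removal. That removal deterministically takes $\bar v$ from $v(r^\ast)$ to $0$, where $r^\ast$ is the last survivor's budget after the final assignment. Consequently, with the paper's $\Delta_t = p_t - (H/s_t)w_t$, the stated equality is false whenever $r^\ast > 1/12$ with positive probability. For instance, an algorithm that assigns nothing has $\ALG = 0$ and every $\Delta_t = 0$, while the right-hand side equals $11H/12$. What the telescoping actually yields is
\[
    \EE[\ALG] = \frac{11H}{12} + \sum_t \EE[\Delta_t] - H\,\EE[v(r^\ast)] \le \frac{11H}{12} + \sum_t \EE[\Delta_t].
\]
Only this inequality is used in \eqref{eq:sum}, so the main theorem is unaffected.

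Three fixes to your write-up:
\begin{enumerate}
\item The sentence saying that a non-assignment of the last item ``is also captured within $\Delta_t$'' is wrong. In that case $\Delta_t = 0$, and the whole drop $Hv(r^\ast)$ occurs at the removal, which is exactly why the equality breaks.
\item If you keep your repair of folding $-Hv(r^\ast)$ into the last $\Delta_t$, you must recheck \Cref{lem:final} for that item. Write $r$ for the survivor's budget before the last item. The modified change is $r - Hv(r)$ if the item goes to the survivor, and $-Hv(r)$ otherwise. Both are at most $0 = F(1)$ once $H \ge 3/2$, so this works.
\item Simply stating the lemma as an inequality is the cleaner repair. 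You should also say outright that the equality cannot be proved, rather than hedging about an ``intended reading.''
\end{enumerate}
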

\begin{proof}
    Let us first note the initial and final values of the potential function $\varphi$.
    Initially, $v(r) = 11/12$ for each advertiser. Therefore, the initial value is $\varphi_i = 11H/12$.
    At the end of the algorithm, the set of remaining advertisers is empty and we define $\bar{v}$ to be $0$.
    Therefore, the final value is $\varphi_f = \ALG$.

    In order to claim that $\EE[\varphi_f] = \varphi_i + \sum_{t}\EE[\Delta_t]$,
    we now show that each removal of advertiser(s) does not change $\varphi$ in expectation.
    The removal of an advertiser can only affect $\bar{v}_t$ in \eqref{eq:pot}, 
    so we show that $\bar{v}_t$ is preserved.
    In the first and final phases, the next advertiser removed is uniform among the advertisers still present,
    and removing a uniformly chosen advertiser preserves the average capped budget $\bar{v}_t$ in expectation.
    For removals in the middle phase, each removed advertiser is uniformly chosen
    from their respective halves $A_i$ and $B_i$.
    Therefore, all advertisers have the same probability $2/d$ of being removed,
    and hence, the expectation of $\bar{v}_t$ is again unchanged.
\end{proof}

We now write the changes $\Delta_t$ in terms of the change to the revenue collected by $\ALG$
for item $t$, and the change in $\bar{v}_t$.
For the first part, let $p_t$ be the revenue collected by $\ALG$ for item $t$.
To define the change in $\bar{v}_t$ precisely, let $s_t$ denote
the number of remaining advertisers just before assigning item $t$.
Suppose $r$ is the remaining budget of the advertiser that item $t$ is assigned to.
Define $w_t \coloneqq v(r) - v(r - p_t)$ to be the decrease in the advertiser's capped budget caused by this assignment.
We have
\begin{equation}
    \Delta_t = p_t - \frac{H}{s_t}\cdot w_t.
\end{equation}

    \subsection{Change in Potential for the Three Phases}\label{sec:local}
    It now suffices to bound the local changes to the potential in each of the three phases.
We begin by bounding $\Delta_t$ in the first phase.

\begin{lemma}\label{lem:first}
    For every item $t$ arriving in the first phase, we have
    \[
        \Delta_t \le \frac34\left(1 - \frac{H}{s_t}\right).
    \]
\end{lemma}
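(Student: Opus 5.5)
The plan is a direct case analysis on the state of the advertiser that receives item $t$. During the first phase every item is worth $3/4$ to every remaining advertiser, and all budgets start at $1$. So I would first prove, by induction over the items of the first phase, that every advertiser's remaining budget always lies in $\{1, 1/4, 0\}$. An advertiser with budget $1$ pays $3/4$ and drops to $1/4$; one with budget $1/4$ pays $1/4$ and drops to $0$; one with budget $0$ pays nothing. (If the algorithm leaves item $t$ unassigned, or gives it to a removed advertiser who bids $0$, then $p_t = w_t = 0$.) I would also record that $H/s_t \le 1$ throughout this phase. Indeed, $s_t \ge M + 1 > 2N/3$, so $H/s_t \le (N/e)/(2N/3) = 3/(2e) < 1$, and hence the right-hand side $\frac34(1 - H/s_t)$ is positive.

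With these facts, $\Delta_t = p_t - \frac{H}{s_t} w_t$ is evaluated in each of the four cases.
\begin{itemize}
\item No revenue ($r = 0$, or no assignment): $\Delta_t = 0 \le \frac34(1 - H/s_t)$ by positivity.
\item $r = 1$: $p_t = 3/4$ and $w_t = v(1) - v(1/4) = 11/12 - 1/6 = 3/4$. Then $\Delta_t = \frac34(1 - H/s_t)$ exactly, which is the tight case.
\item $r = 1/4$: $p_t = 1/4$ and $w_t = v(1/4) - v(0) = 1/6$. The desired inequality $\frac14 - \frac{H}{6 s_t} \le \frac34 - \frac{3H}{4 s_t}$ rearranges to $\frac{7}{12}\cdot\frac{H}{s_t} \le \frac12$, i.e.\ $H/s_t \le 6/7$. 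This holds since $H/s_t \le 3/(2e) \approx 0.552$.
\end{itemize}

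The only step needing care is this last case. Here the cap in $v$ makes the potential drop by less than the revenue gained, so I must use the quantitative bound on $H/s_t$ coming from $s_t > 2N/3$, not merely $H/s_t \le 1$. Everything else is bookkeeping: checking the invariant on budgets, and noting that removals of advertisers do not alter the budgets of the survivors.
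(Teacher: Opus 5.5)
Your proposal is correct and follows the paper's proof essentially step for step. It uses the same case analysis on the recipient's budget in $\{0, 1/4, 1\}$, with equality when the budget is $1$. For the budget-$1/4$ case, both arguments use $s_t > M$ and $H/M = 3/(2e)$; you compare against the exact threshold $6/7$, while the paper compares against the cruder $3/4$.
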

\begin{proof}
    Since the only positive bids in the first phase have value $3/4$,
    the remaining budgets of the advertisers must belong to $\{0, 1/4, 1\}$.
    If the algorithm assigns item $t$ to an advertiser with budget $1$,
    we have $p_t = w_t = 3/4$ which gives the desired bound on $\Delta_t$.

    Now recall that $M = 2N/3$ is the number of advertisers remaining after the first phase.
    Assigning $t$ to an advertiser with budget $1/4$ gives $p_t = 1/4$ and $w_t = 1/6$.
    Its change is smaller because
    \[
        \frac34\left(1 - \frac{H}{s_t}\right)
        - \left(\frac14 - \frac{H}{6s_t}\right)
        = \half - \frac{7H}{12s_t} > 0,
    \]
    since $s_t > M$ and $H/M = 3/(2e) < 3/4$.
    Finally, assigning $t$ to an advertiser with budget $0$ or to a removed advertiser gives a change of $0$,
    whereas the stated bound on $\Delta_t$ is positive because $s_t > M > H$.
\end{proof}

We now consider the middle phase.

\begin{lemma}\label{lem:middle}
    Let items $t, t+1, t+2$ be the items corresponding to a group in the middle phase.
    Then,
    \begin{equation}\label{eq:group}
        \EE[\Delta_t + \Delta_{t + 1} + \Delta_{t + 2}] \le
        2\left(1 - \frac{11\alpha}{12}\right) - \frac{\alpha}{24}, \text{ where } \alpha = \frac{H}{M}.
    \end{equation}
    The expectation is over the random choice of which subgroup bids for item $3$.
\end{lemma}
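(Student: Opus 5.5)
Throughout the middle phase no advertiser is removed (removals are postponed until after all $3M/d$ items), so $s_t = M$ for every middle-phase item. Hence each such item contributes $\Delta_t = p_t - \alpha w_t$ with $\alpha = H/M = 3/(2e) \approx 0.552$. The plan is a finite case analysis over the remaining budget $r$ of the advertiser that receives each item. The key observation is that the possible states are very limited. After the first phase, every advertiser in group $i$ has budget in $\{0, 1/4, 1\}$, as noted in the proof of \Cref{lem:first}. Within the round, the only new budget values that can appear are $0$ (after item $1$ goes to a full advertiser) and $1/2$ (after item $2$ goes to a full advertiser).

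I will first tabulate $\Delta(r,b) = \min(b,r) - \alpha\bigl(v(r) - v(r-\min(b,r))\bigr)$ for the relevant pairs.
\begin{itemize}
\item For $b=1$: $r=1$ gives $1 - 11\alpha/12$, and $r=1/4$ gives $1/4 - \alpha/6$.
\item For $b=1/2$: $r=1$ gives $1/2 - \alpha/2$, $r=1/2$ gives $1/2 - 5\alpha/12$, and $r=1/4$ gives $1/4 - \alpha/6$.
\item $r=0$, or an advertiser with bid $0$, gives $0$.
\end{itemize}
Using the numerical value of $\alpha$, I will check three facts. Item $1$ always satisfies $\Delta_t \le 1 - 11\alpha/12$. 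Item $2$ always satisfies $\Delta_{t+1} \le 1/2 - \alpha/2$, which is the largest among the options $1/2-\alpha/2$, $1/4-\alpha/6$ and $0$. Item $3$ satisfies $\Delta_{t+2} \le 1/2 - 5\alpha/12$, and moreover $\Delta_{t+2} \le 1/2 - \alpha/2$ unless the chosen subgroup contains an advertiser with remaining budget exactly $1/2$.

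The heart of the argument is to bound $\EE[\Delta_{t+1} + \Delta_{t+2}] \le 1 - 23\alpha/24$. Adding the item-$1$ bound $1 - 11\alpha/12$ then gives exactly $2(1 - 11\alpha/12) - \alpha/24$. I split on how item $2$ is assigned.

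\emph{Case 1: item $2$ goes to a full-budget advertiser $u$.} Then $\Delta_{t+1} = 1/2 - \alpha/2$, and $u$ is the unique advertiser with budget $1/2$. Since $u$ lies in exactly one of $A_i, B_i$, with probability $1/2$ the subgroup bidding on item $3$ does not contain $u$. In that event $\Delta_{t+2} \le 1/2 - \alpha/2$; otherwise $\Delta_{t+2} \le 1/2 - 5\alpha/12$. The expectation is therefore at most $(1/2-\alpha/2) + \tfrac12(1/2-5\alpha/12) + \tfrac12(1/2-\alpha/2) = 1 - 23\alpha/24$.

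\emph{Case 2: item $2$ goes anywhere else.} Then $\Delta_{t+1} \le 1/4 - \alpha/6$, no advertiser has budget $1/2$, and $\Delta_{t+2} \le 1/2 - \alpha/2$. Numerically, the sum $\approx 0.158 + 0.224$ is comfortably below $1 - 23\alpha/24 \approx 0.471$.

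The step I expect to need the most care is making the enumeration of budget states exhaustive. This means confirming that no advertiser in the group can hold budget $1/2$ except via item $2$, and that leftover first-phase budgets of $1/4$ or $0$ never beat the full-budget options. Once that is established, the only remaining work is the numerical comparisons using $\alpha = 3/(2e)$.
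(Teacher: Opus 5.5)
Your proof is correct and follows the paper's argument: the same budget tabulation, the same per-item bounds $1-11\alpha/12$ and $(1-\alpha)/2$, and the same key observation that item $2$ creates at most one budget-$1/2$ advertiser in the group, which lies in the bidding subgroup with probability $1/2$ and so contributes the extra $\alpha/24$. The only differences are cosmetic: you split on how item $2$ was assigned and bound $\Delta_{t+1}+\Delta_{t+2}$ jointly, while the paper bounds $\Delta_{t+1}$ and $\EE[\Delta_{t+2}]$ separately. Also, your uniqueness claims about budget $1/2$ should be read as ``within group $i$,'' since advertisers from earlier groups may hold budget $1/2$ but bid $0$ here.
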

\begin{proof}
    Throughout the assignment of these items, there are $M$ advertisers present,
    and hence, a payment of $p$ consuming capped budget $w$ changes the potential by $p - \alpha w$.

    Now, consider only the group corresponding to these items.
    All remaining advertiser budgets in this group are in $\{0, 1/4, 1\}$.
    The first item has a bid of $1$. We do a case analysis for assigning 
    this item to advertisers with the three possible remaining budgets.
    Assigning it to an advertiser with budget $1$ gives a change $1 - 11\alpha/12$.
    Assigning it to an advertiser with budget $1/4$ gives $1/4 - \alpha/6$, which 
    is smaller than $1 - 11\alpha/12$ by $(3/4)(1 - \alpha) > 0$.
    Finally, a $0$ payment changes the potential by $0$, which is also
    less than $1 - 11\alpha/12$.
    Thus, in all three cases, we have $\Delta_t \le 1 - 11\alpha/12$.

    Now, we consider the second and third items. First, note that after assigning
    the first item, the remaining advertiser budgets still belong to $\{0, 1/4, 1\}$.
    We again do a case analysis for the second item.
    Assigning the item to an advertiser with remaining budget $1$ changes 
    the potential by $(1 - \alpha)/2$. 
    Assigning the item to an advertiser with remaining budget $1/4$ changes 
    the potential by $1/4 - \alpha/6$, which is smaller than the 
    previous case since
    \[
        \frac{1 - \alpha}{2} - \left(\frac14 - \frac\alpha6\right)
        = \frac14 - \frac\alpha3 > 0.
    \]
    Finally, a $0$ payment is also smaller than $(1 - \alpha)/2$.
    Thus $\Delta_{t + 1} \le (1 - \alpha)/2$.

    The second item creates at most one advertiser with remaining budget $1/2$.
    All other advertisers still have remaining budgets in $\{0, 1/4, 1\}$.
    Recall that the third item is bid for by either $A_i$ or $B_i$,
    chosen uniformly.
    Thus, if there exists an advertiser with remaining budget $1/2$,
    it lies on the side that bids for the third item with probability $1/2$.
    By the calculation for $\Delta_{t+1}$,
    assigning the third item to an advertiser with budget $1$ or $1/4$ still
    produces a change of at most $(1 - \alpha)/2$.
    Assigning it to an advertiser with budget $1/2$ instead produces a change of
    \[
        \half - \alpha\cdot  v(1/2) = \frac{1 - \alpha}{2} + \frac\alpha{12}.
    \]
    The extra $\alpha/12$ is available only if the unique advertiser with 
    remaining budget $1/2$
    lies on the side that bids for the third item, which occurs with probability at most $1/2$.
    Consequently, $\EE[\Delta_{t + 2}] \le (1 - \alpha)/2 + \alpha/24$.
    Adding the three bounds gives
    \[
        \EE[\Delta_t + \Delta_{t + 1} + \Delta_{t + 2}] \le
        \left(1 - \frac{11\alpha}{12}\right)
        + \frac{1 - \alpha}{2}
        + \left(\frac{1 - \alpha}{2} + \frac\alpha{24}\right)
        = 2\left(1 - \frac{11\alpha}{12}\right) - \frac\alpha{24}. \qedhere
    \]
\end{proof}

Finally, we consider the final phase.
\begin{lemma}\label{lem:final}
    For every item $t$ arriving in the final phase, we have
    \[
        \Delta_t \le \frac34\left(1 - \frac {H}{s_t}\right)_+  + \frac14\left(1 - \frac23\cdot\frac{H}{s_t}\right)_+,
        \text{ where } z_+ = \max\{z, 0\}.
    \]
    
\end{lemma}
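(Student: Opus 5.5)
The plan is a case analysis on the remaining budget $r$ of the advertiser that receives item $t$. Each case gives an explicit value of $\Delta_t = p_t - (H/s_t)\,w_t$, and I will check that this value is at most the right-hand side of the statement. Throughout, write $x \coloneqq H/s_t > 0$.

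\textbf{Possible budgets.} First I determine which budgets can occur at the start of the final phase.
\begin{itemize}
\item After the first phase, budgets lie in $\{0,1/4,1\}$, as already used in the proofs of \Cref{lem:first} and \Cref{lem:middle}.
\item In the middle phase the only bids are $1$ and $1/2$. A budget of $1$ can become $0$ or $1/2$, a budget of $1/2$ can become $0$, and a budget of $1/4$ can become $0$.
\item In the final phase every positive bid is $1$, so the advertiser receiving the item is driven to $0$ and all other budgets are unchanged.
\end{itemize}
Hence at every step of the final phase, every remaining budget lies in $\{0,1/4,1/2,1\}$.

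\textbf{Value of $\Delta_t$ in each case.} If item $t$ goes to an advertiser with budget $r$, then $p_t = r$ and $w_t = v(r) = \max(r - 1/12, 0)$. Assigning to a removed advertiser, or to one with budget $0$, gives $\Delta_t = 0$. The three nonzero cases are
\[
  r=1:\ 1 - \tfrac{11}{12}x, \qquad r=\tfrac12:\ \tfrac12 - \tfrac{5}{12}x, \qquad r=\tfrac14:\ \tfrac14 - \tfrac16 x .
\]

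\textbf{Identifying the right-hand side.} The key observation is that the right-hand side equals
\[
  f(x) \coloneqq \max\left(0,\ \tfrac14 - \tfrac16 x,\ 1 - \tfrac{11}{12}x\right).
\]
I verify this by splitting on $x$:
\begin{itemize}
\item For $x \le 1$, the right-hand side is $\tfrac34(1-x) + \tfrac14(1 - \tfrac23 x) = 1 - \tfrac{11}{12}x$. In this range this term dominates, since $(1 - \tfrac{11}{12}x) - (\tfrac14 - \tfrac16 x) = \tfrac34(1-x) \ge 0$.
\item For $1 < x \le 3/2$, the right-hand side is $\tfrac14 - \tfrac16 x$, which now dominates.
\item For $x > 3/2$, both positive parts vanish and the right-hand side is $0$.
\end{itemize}
Consequently the cases $r \in \{0, 1/4, 1\}$ and removed advertisers are bounded by $f(x)$ immediately.

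\textbf{The case $r = 1/2$.} This is the only case not covered directly, since its budget is created in the middle phase and has no counterpart in the first-phase analysis. I will show that $\tfrac12 - \tfrac{5}{12}x$ is a convex combination of the other two affine functions, namely
\[
  \tfrac12 - \tfrac{5}{12}x = \tfrac13\left(1 - \tfrac{11}{12}x\right) + \tfrac23\left(\tfrac14 - \tfrac16 x\right).
\]
Both sides are affine in $x$, so it suffices to compare coefficients. The constants give $\tfrac13 + \tfrac16 = \tfrac12$, and the slopes give $\tfrac{11}{36} + \tfrac{4}{36} = \tfrac{5}{12}$. A convex combination of two quantities is at most their maximum, hence at most $f(x)$, which completes the argument.

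I expect no real obstacle beyond spotting this convex-combination identity; everything else is routine arithmetic.
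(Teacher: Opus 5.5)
Your proof is correct, but it is organized differently from the paper's. The paper does not enumerate budgets or split on $x = H/s_t$. It writes $r = u + \ell$ with $u = (r - 1/4)_+ \in [0, 3/4]$ and $\ell = \min\{r, 1/4\} \in [0, 1/4]$, and observes that $v(r) = u + \tfrac23 \ell$ for every realizable budget $r \in \{0, 1/4, 1/2, 1\}$. This gives $\Delta_t = (1 - x)u + (1 - \tfrac23 x)\ell$, and the paper bounds each term by the positive part of its coefficient times the largest possible value of $u$ or $\ell$.

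Your convex-combination identity is the same fact in different coordinates. In the $(u,\ell)$ picture, $r = 1/2$ is the point $(1/4, 1/4)$, which lies on the segment between $r = 1$ at $(3/4, 1/4)$ and $r = 1/4$ at $(0, 1/4)$, with exactly your weights $\tfrac13, \tfrac23$. Both arguments rest on $v$ being affine on $[1/4, 1]$.

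The paper's version is shorter and explains where the weights $3/4, 1/4$ and the rates $1, 2/3$ in the bound come from. It also handles every budget in $\{0\} \cup [1/4, 1]$ at once. Yours is more explicit and shows something extra: the right-hand side equals the largest value of $\Delta_t$ over the possible budgets.

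For the lemma itself, you only need the right-hand side to be at least each of $0$, $\tfrac14 - \tfrac16 x$, and $1 - \tfrac{11}{12}x$. This follows immediately from $z_+ \ge z$ and $z_+ \ge 0$. So your three-way split on $x$ can be dropped, including the step where you tacitly use $1 - \tfrac{11}{12}x \ge 0$ for $x \le 1$.
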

\begin{proof}
    Let $t$ be any item arriving in the final phase.
    Since all bids are $1$ or $0$, a positive payment empties its recipient's budget $r$.
    Write $r = u + \ell$, where $u = (r - 1/4)_+$ and $\ell = \min\{r, 1/4\}$.
    Note that $0 \le u \le 3/4$ and $0 \le \ell \le 1/4$.
    Since all remaining advertiser budgets are in $\{0, 1/4, 1/2, 1\}$,
    we have $v(r) = u + (2/3)\ell$.
    This implies
    \[
        \Delta_t = \left(1 - \frac{H}{s_t}\right)u
        + \left(1 - \frac{2}{3}\cdot \frac{H}{s_t}\right)\ell
        \le \frac34\left(1 - \frac{H}{s_t}\right)_+
        + \frac14\left(1 - \frac{2}{3}\cdot\frac{H}{s_t}\right)_+.
    \]
    A payment of $0$ does not change the potential, and satisfies the bound
    above, which is non-negative.
\end{proof}

    \subsection{Adding the Changes}\label{sec:finish}
    It remains to bound the sum of changes $\Delta_t$ across the three phases.
Recall that in the first and final phases, successive items arrive with
$N$, $N - 1$, $\dots$, $M + 1$ advertisers and $R$, $R - 1$, $\dots$, $1$ advertisers remaining, respectively.
Therefore, using \Cref{lem:first,lem:middle,lem:final} in \eqref{eq:master}, we get
\begin{equation}\label{eq:sum}
    \EE[\ALG] \le \frac{11H}{12}
    + \sum_{s = M + 1}^{N}\frac34\left(1 - \frac Hs\right)
    + \Delta_{\mathrm{mid}} + \sum_{s = 1}^{R}F(s),
\end{equation}
where $\Delta_{\mathrm{mid}}$ denotes the total expected sum of $\Delta_t$ for the entire middle phase, and
\begin{equation}\label{eq:F}
    F(s) = \frac34\left(1 - \frac Hs\right)_+ + \frac14\left(1 - \frac23\cdot \frac{H}{s}\right)_+.
\end{equation}

Replacing the two sums in \eqref{eq:sum} with integrals, we have 
\begin{equation}\label{eq:integrals}
    \EE[\ALG] \le \frac{11H}{12}
    + \frac34\int_M^N\left(1 - \frac Hx\right) \odif{x}
    + \Delta_{\mathrm{mid}} + \int_0^R F(x) \odif{x} + 2,
\end{equation}
since $1-H/x \le 1$ and $F(x) \le 1$ and are non-decreasing in $x$.

For the remainder of the proof, we will bound $\EE[\ALG]$ solely using \eqref{eq:integrals}.
We first show that if the middle phase were to be omitted, i.e.,
if the final phase started with $M$ advertisers instead of $R$,
then we would get exactly the original bound of $(1-1/e)\OPT$.
We will then show in the subsequent lemma that adding the middle phase decreases our bound on $\EE[\ALG]$
by a constant fraction of $\OPT$.

\begin{lemma}
    \label{lem:baseline}
    We have
    \begin{equation}\label{eq:baseline}
        \frac{11H}{12} + \frac34\int_M^N\left(1 - \frac Hx\right) \odif{x}
        + \int_0^M F(x) \odif{x}
        = \left(1 - \frac1e\right)\OPT.
    \end{equation}
\end{lemma}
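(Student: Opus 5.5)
The proof is a direct evaluation of the three terms on the left-hand side of \eqref{eq:baseline}, using $H = N/e$, $M = 2N/3$ and $\OPT = 11N/12$ from \Cref{lem:opt}. The plan is to express everything as a multiple of $N$ plus a multiple of $H$, and to check that the logarithmic terms cancel.

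\textbf{The first-phase integral.} This is elementary:
\[
\frac34\int_M^N\Bigl(1-\frac Hx\Bigr)\odif{x} = \frac34\Bigl(\frac N3 - H\ln\frac32\Bigr),
\]
since $N - M = N/3$ and $\ln(N/M) = \ln(3/2)$.

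\textbf{The final-phase integral.} The one point needing care is the positive parts in $F$ from \eqref{eq:F}. The first summand of $F$ vanishes for $x < H$, and the second vanishes for $x < 2H/3$. Both thresholds lie below $M$, because $H/M = 3/(2e) < 1$. So I split the integral as
\[
\int_0^M F(x)\odif{x} = \frac34\int_H^M\Bigl(1-\frac Hx\Bigr)\odif{x} + \frac14\int_{2H/3}^M\Bigl(1-\frac{2H}{3x}\Bigr)\odif{x}.
\]
The first piece equals $\tfrac34\bigl(M - H - H\ln(M/H)\bigr)$, with $\ln(M/H) = \ln(2e/3) = 1 - \ln(3/2)$. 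The second piece equals $\tfrac14\bigl(M - \tfrac{2H}{3} - \tfrac{2H}{3}\ln\tfrac{3M}{2H}\bigr)$, with $\ln\tfrac{3M}{2H} = \ln e = 1$.

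\textbf{Adding up.} The $\ln(3/2)$ contributions, $-\tfrac34 H\ln\tfrac32$ and $+\tfrac34 H\ln\tfrac32$, cancel. The $N$-terms sum to
\[
\frac N4 + \frac34 M + \frac14 M = \frac N4 + \frac{2N}3 = \frac{11N}{12}.
\]
The $H$-terms sum to
\[
\frac{11H}{12} - \frac34 H - \frac34 H - \frac16 H - \frac16 H = -\frac{11H}{12}.
\]
Hence the left side of \eqref{eq:baseline} is $\tfrac{11}{12}(N - H) = \tfrac{11N}{12}\bigl(1 - \tfrac1e\bigr) = (1 - 1/e)\OPT$.

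The only potential obstacle is bookkeeping: getting the cutoffs of the positive parts right and verifying that the logarithms simplify exactly. They do, because $M/H = 2e/3$; this is what makes the choices $M = 2N/3$ and the cap $1/12$ in $v$ mesh.
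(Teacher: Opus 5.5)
Your computation is correct and follows the paper's route: split $\int_0^M F$ at the thresholds $H$ and $2H/3$, then evaluate directly using $N/H = e$ and $3M/(2H) = e$. The only difference is bookkeeping: the paper merges the $[H,M]$ and $[M,N]$ integrals into one over $[H,N]$ and writes $11H/12 = \tfrac34 H + \tfrac14\cdot\tfrac{2H}{3}$, which gives $(1-1/e)(\tfrac34 N + \tfrac14 M)$ directly, so the $\ln(3/2)$ terms you cancel never appear.
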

\begin{proof}
    We first evaluate $\int_0^M F(x) \odif{x}$
    using the definition of $F$ in~\eqref{eq:F}.
    The factor $(1 - H/x)_+$ is zero for $0 < x \le H$ and equals $1 - H/x$ for $x \ge H$.
    Similarly, $(1 - 2H/(3x))_+$ is zero for $0 < x \le 2H/3$
    and equals $1 - 2H/(3x)$ for $x \ge 2H/3$.
    Since $M > H$, integrating these two terms gives
    \[
        \int_0^M F(x) \odif{x}
        = \frac34\int_H^M\left(1 - \frac Hx\right) \odif{x}
        + \frac14\int_{2H/3}^M\left(1 - \frac{2H}{3x}\right) \odif{x}.
    \]
    The integral over $[H, M]$ on the right has the same integrand and coefficient
    as the integral over $[M, N]$ in~\eqref{eq:baseline}.
    Adding them gives a single integral over $[H, N]$ with coefficient $3/4$.
    We also write $11H/12 = (3/4)H + (1/4)(2H/3)$
    so that we can group the terms with coefficients $3/4$ and $1/4$.
    These rewritings give
    \begin{align*}
        &\frac{11H}{12} + \frac34\int_M^N\left(1 - \frac Hx\right) \odif{x}
        + \int_0^M F(x) \odif{x} \\
        &\qquad= \frac34\left[H + \int_H^N\left(1 - \frac Hx\right) \odif{x}\right]
        + \frac14\left[\frac{2H}{3}
                     + \int_{2H/3}^M\left(1 - \frac{2H}{3x}\right) \odif{x}\right] \\
        &\qquad= \left(1 - \frac1e\right)\left(\frac34 N + \frac14 M\right)
        = \left(1 - \frac1e\right)\OPT.
    \end{align*}
    For the second equality,
    the first bracket evaluates to $N - H\ln(N/H) = N - H$, since $N/H = e$.
    The second bracket similarly evaluates to $M - 2H/3$, since $M/(2H/3) = e$.
    Substituting $H = N/e$ and $2H/3 = M/e$ gives the common factor $1 - 1/e$.
    The last equality uses $M = 2N/3$ and $\OPT = 11N/12$.
\end{proof}

Applying~\Cref{lem:baseline} to~\eqref{eq:integrals}, we get
\begin{equation}\label{eq:comparison}
    \EE[\ALG] \le \left(1 - \frac1e\right)\OPT
    + \Delta_{\mathrm{mid}} - \int_{R}^M F(x) \odif{x} + 2.
\end{equation}
It remains to show that the middle contribution $\Delta_{\mathrm{mid}}$ is smaller than the integral
$\int_{R}^M F(x) \odif{x}$ it replaces.

First, we calculate $\Delta_{\mathrm{mid}}$.
The middle phase has $M/d$ groups, with $M$ advertisers present throughout the phase.
Summing the bound in~\Cref{lem:middle} over these groups and substituting $\alpha = H/M$
gives the following upper bound on the total expected change in the middle phase:
\begin{equation}\label{eq:mid}
    \Delta_{\mathrm{mid}} \le \frac Md\left[2\left(1 - \frac{11H}{12M}\right) - \frac{H}{24M}\right].
\end{equation}

\begin{lemma}\label{lem:middle-comparison}
    For every sufficiently large even constant $d$, there is a constant $\eps > 0$, independent of $N$, such that
    \[
        \Delta_{\mathrm{mid}} - \int_R^M F(x) \odif{x} < -\eps N.
    \]
\end{lemma}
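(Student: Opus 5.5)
The plan is to compare the two quantities directly. Both are of order $M/d$, and their leading terms cancel exactly. What survives is the $-\alpha/24$ gain from \Cref{lem:middle}, minus an error of relative order $1/d$ that comes from the integrand $F$ varying over the short interval $[R,M]$. Throughout, write $\alpha = H/M = 3/(2e) \approx 0.55$ and recall $R = M(1-2/d)$.

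\textbf{Step 1: simplify $F$ on $[R,M]$.} Since $d \ge 8$, we have $R \ge \tfrac34 M > \alpha M = H$, so both positive parts in \eqref{eq:F} are active for $x \in [R,M]$. There
\[
F(x) = \tfrac34\bigl(1 - \tfrac{H}{x}\bigr) + \tfrac14\bigl(1 - \tfrac{2H}{3x}\bigr) = 1 - \tfrac{11}{12}\cdot\tfrac{H}{x}.
\]
This is increasing in $x$, and the interval has length $M - R = 2M/d$. Bounding the integrand below by its value at $x=R$ gives
\[
\int_R^M F(x)\,\odif{x} \ge \frac{2M}{d}\Bigl(1 - \frac{11}{12}\cdot\frac{\alpha}{1-2/d}\Bigr).
\]

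\textbf{Step 2: subtract from \eqref{eq:mid}.} Rewrite \eqref{eq:mid} as $\Delta_{\mathrm{mid}} \le \frac{M}{d}\bigl[2 - \frac{11}{6}\alpha - \frac{\alpha}{24}\bigr]$. Subtracting the Step 1 bound, the constant terms $2M/d$ cancel and the $\frac{11}{6}\alpha$ terms nearly cancel, leaving
\[
\Delta_{\mathrm{mid}} - \int_R^M F(x)\,\odif{x} \le \frac{M\alpha}{d}\Bigl[-\frac{1}{24} + \frac{11}{6}\Bigl(\frac{1}{1-2/d} - 1\Bigr)\Bigr] = \frac{M\alpha}{d}\Bigl[-\frac1{24} + \frac{11}{3(d-2)}\Bigr].
\]

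\textbf{Step 3: choose $d$.} The bracket is negative exactly when $d - 2 > 88$. So any even $d \ge 92$ works, and for such $d$ we may take
\[
\eps = \frac{2\alpha}{3d}\Bigl(\frac1{24} - \frac{11}{3(d-2)}\Bigr) > 0,
\]
using $M = 2N/3$. This constant depends only on $d$, not on $N$. For the strict inequality, shrink $\eps$ slightly, for instance halve it, which is harmless since $N > 0$.

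\textbf{Main obstacle.} The only real issue is that the gain $\alpha/24$ per group is small. The first-order terms must cancel exactly, so the discretization error from approximating $F$ by its left-endpoint value must be controlled at order $1/d$ relative to the leading term. Replacing $F$ by $F(R)$ costs exactly such an error, which is why $d$ must be taken large. A sharper evaluation of the integral via logarithms would reduce the required $d$, but it is unnecessary for the statement.
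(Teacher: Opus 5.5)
Your proof is correct and is essentially the paper's argument: both use monotonicity to bound $F(x) \ge F(R)$ on $[R,M]$ and compare the result with \eqref{eq:mid}, except that you carry out the arithmetic explicitly (obtaining the threshold $d \ge 92$ and a concrete $\eps$) where the paper only invokes continuity as $R/M \to 1$. As a side remark, your computation shows that this bound gives exactly $0$ at $d=90$, so the paper's parenthetical value for $d=90$ must rely on the sharper logarithmic evaluation of the integral that you mention.
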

\begin{proof}
    Since $M > H$, we have $F(M) = 1 - 11H/(12M)$.
    Using $M - R = 2M/d$, we can write~\Cref{eq:mid} as
    \[
        \Delta_{\mathrm{mid}} \le (M - R)\left(F(M) - \frac{H}{48M}\right).
    \]
    Since $F$ is non-decreasing, we have $F(x) \ge F(R)$ throughout $[R, M]$, so
    \[
        \Delta_{\mathrm{mid}} - \int_R^M F(x) \odif{x}
        \le (M - R)\left(F(M) - F(R) - \frac{H}{48M}\right).
    \]
    By~\eqref{eq:F}, the values $F(M)$ and $F(R)$ depend only on
    $H/M = 3/(2e)$ and $R/M = 1 - 2/d$, and not on $N$.
    As $d$ grows, $R/M$ tends to $1$, so continuity gives $F(M) - F(R) \to 0$.
    In contrast, $H/(48M)$ is a fixed positive constant.
    Fixing a sufficiently large even $d$ therefore makes the expression in parentheses
    a negative constant independent of $N$.
    Since $M - R = 4N/(3d)$, the claim follows.
    (We note that choosing $d=90$, one can show the lemma statement for $\eps = 1/(4320e)$).
\end{proof}

\Cref{thm:main} now follows from \Cref{lem:baseline,lem:middle-comparison}.

    \section*{AI Usage}
    We used the publicly available GPT-6 Astra model to obtain
    the first construction that gave an upper bound below $1-1/e$.
    However, this construction and its analysis were extremely complicated.
    Over a period of time, we progressively changed the construction to simplify it to its current form.

    \appendix

    \section{The Upper-Triangular Bound}\label{app:upper-triangular}
    In this section, we present a proof of the classical $1 - 1/e$ upper bound
using the upper-triangular construction of Karp, Vazirani, and Vazirani~\cite{KVV90}.
This proof uses a potential that has the same form as in~\Cref{sec:potential}.

We briefly restate the standard upper bound construction.
Start with $N$ advertisers, each with budget $1$.
The construction has $N$ {\em rounds}, where in each round,
an item is presented that has a bid value of $1$ from each remaining advertiser.
Once the algorithm assigns this item, one advertiser chosen uniformly at random is removed from the remaining set,
before continuing on to the next round.
When an advertiser is removed, all its bids on subsequent items are of value $0$.
An offline algorithm can assign each item to the advertiser removed immediately afterward
and collect a revenue of $1$ in every round, so every input in this distribution
(which we call $\SD_{\rm KVV}$) has $\OPT = N$.

\begin{lemma}
    \label{lem:upper-triangular}
    On the input distribution $\SD_{\rm KVV}$, every deterministic online algorithm satisfies
    \[
        \EE[\ALG] \le \left(1 - \frac1e\right)N + 1.
    \]
\end{lemma}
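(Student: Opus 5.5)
We follow the template of \Cref{sec:potential}, this time with the uncapped budget. Set $H \coloneqq N/e$ and use the potential
\[
    \varphi(t) = \ALG_t + H\cdot \bar r_t,
\]
where $\bar r_t$ is the average remaining budget over the advertisers still present after item $t$ is assigned, and $\bar r_t = 0$ once no advertiser remains. Initially every budget is $1$, so $\varphi_i = H$. At the end $\varphi_f = \ALG$. Exactly as in the proof of \Cref{lem:master}, removing a uniformly random remaining advertiser leaves $\bar r_t$ unchanged in expectation. Hence
\[
    \EE[\ALG] = H + \sum_t \EE[\Delta_t].
\]

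The next step is a local bound on $\Delta_t$. Let $s_t$ be the number of remaining advertisers when item $t$ arrives. Since every bid is $0$ or $1$, a payment $p_t$ collected from an advertiser with remaining budget $r$ equals $r$ and drains that budget completely. The average budget therefore drops by $p_t/s_t$, giving
\[
    \Delta_t = p_t\left(1 - \frac{H}{s_t}\right).
\]
Assigning the item to a removed advertiser, or to one with zero budget, gives $p_t = 0$. With $0 \le p_t \le 1$ this yields $\Delta_t \le (1 - H/s_t)_+$. The $s$-th item from the end sees $s$ advertisers, so $s_t$ runs over $N, N-1, \dots, 1$, and
\[
    \EE[\ALG] \le H + \sum_{s=1}^N \left(1 - \frac Hs\right)_+ .
\]

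Finally we compare the sum with an integral. The function $(1 - H/x)_+$ is non-decreasing and bounded by $1$, so
\[
    \sum_{s=1}^N (1 - H/s)_+ \le \int_0^N (1 - H/x)_+ \odif{x} + 1,
\]
the same step used for \eqref{eq:integrals}. The integral equals
\[
    \int_H^N (1 - H/x) \odif{x} = N - H - H\ln(N/H) = N - 2H.
\]
Adding $H$ gives $N - H = (1 - 1/e)N$, and therefore $\EE[\ALG] \le (1 - 1/e)N + 1$.

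No step here is a real obstacle. The places needing care are two. First, assignments with $p_t = 0$ must be covered, which is why the positive part appears in the bound on $\Delta_t$. Second, the sum-to-integral comparison must lose only an additive $1$: for non-decreasing $g$ bounded by $1$, each term satisfies $g(s) \le \int_s^{s+1} g(x)\odif{x}$, so $\sum_{s=1}^{N} g(s) \le \int_1^{N} g(x)\odif{x} + g(N) \le \int_0^N g(x)\odif{x} + 1$.
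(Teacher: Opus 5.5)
Your proof is correct and follows the paper's argument essentially step for step. You use the same potential $\ALG_t + H\bar r_t$ with $H = N/e$, the same local bound $\Delta_t = p_t(1 - H/s_t) \le (1 - H/s_t)_+$, and the same sum-to-integral comparison losing an additive $1$, which you justify more carefully than the paper does.

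One small point, shared with the paper: the last removal (from one advertiser to none, where $\bar r$ is set to $0$ by convention) does not preserve $\bar r$ in expectation. It can only decrease $\varphi$, so your identity $\EE[\ALG] = H + \sum_t \EE[\Delta_t]$ should read $\le$, which is the direction you need.
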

\begin{proof}
    Fix a deterministic online algorithm.
    Let $\ALG_t$ be the revenue collected from the first $t$ items.
    Let $H := N/e$. We use the potential function
    \[
        \varphi(t) = \ALG_t + H \cdot \bar{r}_t,
    \]
    where $\bar{r}_t$ denotes the average value of the remaining budget
    over the remaining advertisers, immediately after assigning item $t$.
    Initially, $\ALG_0 = 0$ and $\bar{r}_0 = 1$, so $\varphi_i = H$.
    At the end, $\ALG_N = \ALG$ and $\bar{r}_N = 0$, so $\varphi_f = \ALG$.
    We therefore bound $\ALG$ by the expected sum of changes to the potential.

    For item $t$, let $s_t$ be the number of advertisers present before its assignment
    and let $p_t$ be the payment collected.
    Every remaining budget is either $0$ or $1$, so $p_t \in \{0, 1\}$.
    The assignment increases revenue by $p_t$ and decreases
    the average remaining budget by $p_t/s_t$.
    The change in $\varphi$ at round $t$, before the removal of an advertiser, is therefore
    \begin{equation}
    \label{eq:triangular}
        \Delta_t = p_t\left(1 - \frac{H}{s_t}\right)
        \le \left(1 - \frac{H}{s_t}\right)_+,
    \end{equation}
    where $z_+ = \max\{z, 0\}$.
    Since each advertiser removed is uniform among the remaining advertisers,
    this does not change $\bar{r}$, and therefore $\EE[\varphi_f] = \varphi_i + \sum_{t}\EE[\Delta_t]$.
    Summing~\eqref{eq:triangular}, once for each $s_t = N, N - 1, \dots, 1$, gives
    \begin{align*}
        \EE[\ALG]
        &\le H + \sum_{s = 1}^{N}\left(1 - \frac Hs\right)_+
        \le H + \int_H^N\left(1 - \frac Hx\right)\odif{x} + 1 \\
        &= N - H\ln(N/H) + 1
        = \left(1 - \frac1e\right)N + 1.\qedhere
    \end{align*}
\end{proof}

        {\small
    \bibliographystyle{alphaurl}
    \bibliography{ian}
    }
\end{document}